\documentclass[11pt]{article}

\usepackage[final]{acl}

\usepackage{times}
\usepackage{latexsym}

\usepackage[T1]{fontenc}

\usepackage[utf8]{inputenc}

\usepackage{microtype}

\usepackage{inconsolata}

\usepackage{graphicx}

\usepackage{amsmath}
\usepackage{amssymb}
\usepackage{mathtools}
\usepackage{amsthm}
\usepackage{bm}
\usepackage{multirow}
\usepackage{hyperref}
\usepackage{url}
\usepackage{booktabs}
\usepackage{amsfonts}
\usepackage{nicefrac}
\usepackage{microtype}
\usepackage{xcolor}
\usepackage{bbding}
\usepackage{wrapfig}
\usepackage{subcaption}
\usepackage{bbm}

\theoremstyle{plain}
\newtheorem{theorem}{Theorem}[section]

\theoremstyle{remark}

\newcommand{\x}{\bm{x}}

\renewcommand{\k}{\bm{k}}

\renewcommand{\k}{\bm{k}}

\title{More Haste, Less Speed: Weaker Single-Layer Watermark Improves Distortion-Free Watermark Ensembles}

\author{Ruibo Chen$^1$, Yihan Wu$^1$, Xuehao Cui$^1$, Jingqi Zhang$^2$, Heng Huang$^1$ \\
  $^1$University of Maryland, College Park, $^2$National University of Singapore\\
  \texttt{\{rbchen,ywu42,cedrus,heng\}@umd.edu},   \texttt{zhangjingqi@u.nus.edu} \\
}

\begin{document}
\maketitle
\begin{abstract}
Watermarking has emerged as a crucial technique for detecting and attributing content generated by large language models. While recent advancements have utilized watermark ensembles to enhance robustness, prevailing methods typically prioritize maximizing the strength of the watermark at every individual layer. In this work, we identify a critical limitation in this "stronger-is-better" approach: strong watermarks significantly reduce the entropy of the token distribution, which paradoxically weakens the effectiveness of watermarking in subsequent layers. We theoretically and empirically show that detectability is bounded by entropy and that watermark ensembles induce a monotonic decrease in both entropy and the expected green-list ratio across layers. To address this inherent trade-off, we propose a general framework that utilizes weaker single-layer watermarks to preserve the entropy required for effective multi-layer ensembling. Empirical evaluations demonstrate that this counter-intuitive strategy mitigates signal decay and consistently outperforms strong baselines in both detectability and robustness.
\end{abstract}

\section{Introduction}
Large language models have demonstrated unprecedented capabilities in generating high-quality text, raising significant concerns regarding copyright infringement, misinformation, and the potential misuse of AI-generated content~\citep{goldstein2023generative,weidinger2021ethical,carlini2022quantifying,DBLP:conf/icml/ChenWGH25}. To address these challenges, watermarking~\citep{Aaronson2022,kirchenbauer2023watermark,zhao2023provable,liu2024adaptive,chen2025watermark,cui2026mc} has emerged as a crucial technique for detecting and attributing machine-generated text. Among various approaches, \textit{distortion-free watermarking}~\citep{Aaronson2022, christ2023undetectable, hu2023unbiased,wu2026ensemble} has garnered particular attention because it mathematically guarantees that the watermarked text distribution matches the original model distribution in expectation, thereby better preserving generation quality.

Recent advancements have shifted focus from single-layer schemes to \textit{watermark ensembles}~\citep{dathathri2024scalable,feng2025bimark,wu2026ensemble}. By applying watermarking functions recursively across multiple layers, ensemble methods aim to aggregate statistical signals to enhance detectability and robustness. The prevailing design philosophy in these methods typically prioritizes maximizing the strength of the watermark at every individual layer for detectability.

In this work, however, we identify a critical and previously underexplored limitation in this ``stronger-is-better'' approach. We demonstrate that while strong watermarks improve detectability within the current layer, they inevitably reduce the entropy of the token distribution for subsequent layers. Since the detectability of probabilistic watermarks fundamentally relies on the uncertainty (entropy) of the underlying distribution, this entropy reduction paradoxically weakens the effectiveness of watermarking in later layers. Consequently, we identify an inherent \textit{trade-off} between \textit{single-layer watermark strength} and the \textit{preservation of entropy} required for effective multi-layer ensembling, as illustrated in Fig.~\ref{fig:teaser}.
\begin{figure*}
    \centering
    \includegraphics[width=0.99\linewidth]{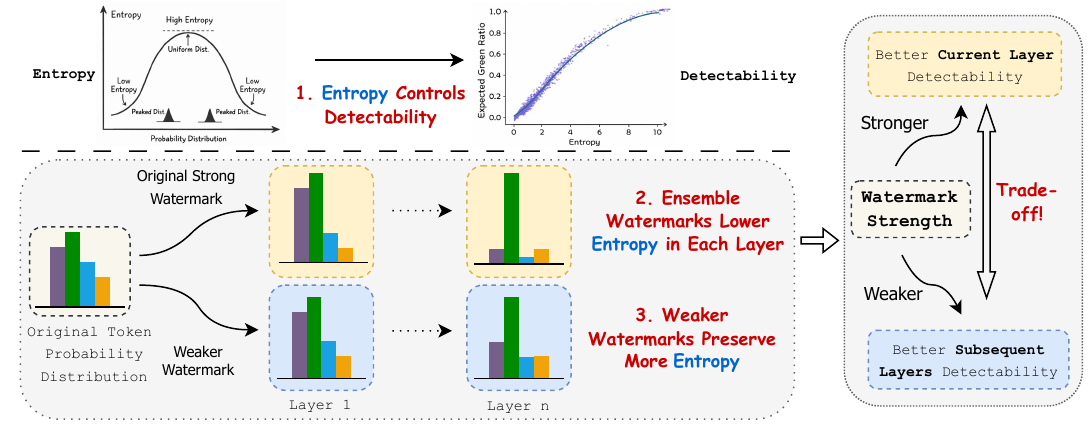}
    \vspace{-1mm}
    \caption{The relationship between entropy, watermark strength, and detectability in distortion-free watermark ensembles. Watermark detectability is closely tied to entropy. Stronger watermarks improve detectability within the current layer. However, they significantly reduce the entropy of the token distribution. In contrast, weaker watermarks preserve more entropy, thereby enhancing detectability in subsequent layers. We propose that there exists an inherent trade-off between the detectability across layers, which can be effectively controlled by adjusting the strength of the watermark.}
    \label{fig:teaser}
    \vspace{-4mm}
\end{figure*}

To resolve this dilemma, we propose a counter-intuitive yet theoretically grounded strategy: the use of \textbf{weaker single-layer watermarks} to improve overall ensemble performance. By interpolating between the original model distribution and the watermarked distribution using a mixing coefficient $\lambda$, we construct a general framework for weaker distortion-free watermarking. We theoretically prove that this approach mitigates the rapid decay of entropy and the expected green-list ratio across layers. By preserving the entropy of the distribution, weaker watermarks allow the ensemble to accumulate statistical evidence more efficiently over longer sequences, ultimately yielding superior detectability.

We summarize our contributions as follows:
\vspace{-1mm}
\begin{itemize}
    \item We demonstrate that for existing watermark methods, detectability is bounded by the entropy of the token distribution.
    \vspace{-2mm}
    \item We theoretically prove that during watermark ensemble generation, watermarks cause a monotonic decrease in both entropy and detectability across layers.
    \vspace{-2mm}
    \item We propose a novel framework to achieve better ensemble detectability by preserving more entropy per layer through \textit{weakening} the single-layer watermarking signal. We validate this approach empirically, showing it outperforms strong baselines in both detectability and robustness.
\end{itemize}

\section{Related Work}
\subsection{Distortion-Free Watermark}
To preserve the original output distribution of the language model, several studies have explored distortion-free watermarking techniques.
\citet{Aaronson2022} propose an early distortion-free method that applies the Gumbel-max trick to adjust token sampling, using prefix $n$-grams as watermark keys.
\citet{christ2023undetectable} employ inverse sampling to modify token distributions in a binary language model, with watermark keys determined by token positions.
ITS-edit and EXP-edit \citep{kuditipudi2023robust} apply inverse sampling and Gumbel-max, respectively, together with a fixed list of watermark keys.
\citet{hu2023unbiased} mix inverse sampling with a $\gamma$-reweighting strategy.
DiPmark \cite{wu2023dipmark} improves the $\gamma$-reweight approach and proposes a robust watermark detector.
STA-1 \cite{mao2024watermark} focuses on improving text quality in low-entropy generation settings.
\citet{dathathri2024scalable} introduce SynthID, which supports scalable distortion-free watermarking across multiple generations.
Finally, \citet{chen2025improved} propose MCmark, which utilizes distribution channels to improve the detectability while remaining distortion-free.

\vspace{-2mm}
\subsection{Watermark Ensemble}
\vspace{-1mm}
Recent research in watermarking for large language models has focused not only on individual watermarking schemes but also on combining multiple watermarking mechanisms to improve detectability and robustness. %One such direction is the development of ensemble methods that build stronger statistical signals by aggregating multiple independent watermark instances.
SynthID~\citep{dathathri2024scalable} uses multiple layers of random functions to strengthen watermark signals and outperforms simpler single-layer methods in detection performance. We further clarify our contributions compared to SynthID in App.~\ref{appendix:compare_synthid}. BiMark~\citep{feng2025bimark} achieves robust multi-bit watermarking by ensembling layers of bit-flip unbiased reweighting function based on random bipartitions of the vocabulary. \citet{wu2025ensemble} introduces an ensemble framework for distortion-free watermarking, which reduces the amount of text required for  detection by increasing the signal-to-noise ratio measured by statistical detectors.

\section{Distortion-Free Watermark Ensemble}

\subsection{Generation}

\paragraph{Notation.}
Let \( V \) denote the vocabulary with cardinality \( N = |V| \).
Given a prompt, a language model \( M \) generates tokens autoregressively.
At time step \( t \), the probability of generating the next token \( x_{t+1} \in V \),
conditioned on the preceding sequence \( \bm{x}_{1:t} \), is denoted by
\( P_M(x_{t+1} \mid \bm{x}_{1:t}) \in \mathcal{P} \),
where \( \mathcal{P} \) is the space of probability distributions over \( V \).

Watermarking is performed by reweighting the original distribution
\( P_M(\cdot \mid \bm{x}_{1:t}) \) into a watermarked distribution
\( F(P_M(\cdot \mid \bm{x}_{1:t}), k) \),
where \( F \in \mathcal{F} : \mathcal{P} \times \mathcal{K} \to \mathcal{P} \) is the watermarking function, \( k \in \mathcal{K} \) is a private watermark key sampled from a known distribution
\( P_{\mathcal{K}} \).
The key \( k \) is assumed to be exactly recoverable during detection.
Here, \( \mathcal{K} \) denotes the key space.

\vspace{-1mm}
\paragraph{Distortion-Free Watermark.}
Following prior work~\citep{hu2023unbiased,dathathri2024scalable},
a watermarking function $F$ is said to be \emph{distortion-free} if it preserves the original token distribution
in expectation over the key space.
Formally, for any \( P_M \in \mathcal{P} \) and any token \( x_{t+1} \in V \),
\begin{equation}\label{eq:unbiased-F}
\begin{aligned}
    &\mathbb{E}_{k \sim P_{\mathcal{K}}}
    \left[ F\!\left(P_M(x_{t+1} \mid \bm{x}_{1:t}),\, k\right) \right]\\
    =& P_M(x_{t+1} \mid \bm{x}_{1:t}).
\end{aligned}
\end{equation}

\vspace{-1mm}
\paragraph{Watermark Ensemble.}
Following~\citet{dathathri2024scalable,wu2026ensemble},
the detectability of a watermark can be further enhanced by ensembling,
which recursively applies the watermarking function \( F \) $n$ times using
independent watermark keys.
Specifically, given \( n \) i.i.d.\ keys \( \bm{k}_{1:n} \sim P_{\mathcal{K}}^n \),
the \( n \)-fold ensemble transform
\( \mathrm{ENS} : \mathbb{N} \times \mathcal{P} \times \mathcal{F} \times \mathcal{K}^n \to \mathcal{P} \)
is defined recursively as

\vspace{-3mm}
\begin{equation}\label{eq:ensemble-def}
\begin{aligned}
&\mathrm{ENS}(n, F, P(\cdot), \bm{k}_{1:n})\\
=&
\begin{cases}
F\!\left(
    \mathrm{ENS}(n-1, F, P(\cdot), \bm{k}_{1:n-1}),
    k_n
\right),\!\!&n \!>\! 1, \\[1pt]
F(P(\cdot), k_1),& n \!=\! 1 .
\end{cases}
\end{aligned}
\end{equation}

Following the terminology of~\citet{dathathri2024scalable},
each application of \( F \) corresponds to one \emph{watermark layer}.

If the watermarking function \( F \) is distortion-free and the keys
\( \bm{k}_{1:n} \) are sampled independently from \( P_{\mathcal{K}} \),
then the resulting \( n \)-layer watermark ensemble remains distortion-free
\citep{dathathri2024scalable,wu2026ensemble}.
That is, for any \( P_M(\cdot \mid \bm{x}_{1:t}) \in \mathcal{P} \),
\begin{equation}\label{eq:unbiased-ensemble}
\begin{aligned}
    &\mathbb{E}_{\bm{k}_{1:n} \sim P_{\mathcal{K}}^n}
    \left[ \mathrm{ENS}(n,F,P_M(\cdot \mid \bm{x}_{1:t}),\k_{1:n}) \right] \\
    =& P_M(\cdot \mid \bm{x}_{1:t}) .
\end{aligned}
\end{equation}

\subsection{Detection}

\paragraph{Single-Layer Detection.}
Most existing watermark detection methods~\citep{kirchenbauer2023watermark,dathathri2024scalable} adopt or resemble the red-green list framework introduced by~\citet{kirchenbauer2023watermark}. Given a private watermark key \(k\), the vocabulary \(V\) is randomly partitioned into a green list \(V_g^k\) and a red list \(V_r^k\), such that \(|V_g^k| = \gamma |V|\), where \(\gamma \in (0,1)\) is a fixed hyperparameter. During generation, the language model’s next-token distribution is modified to bias sampling toward green-list tokens and away from red-list tokens, and have:

\vspace{-1mm}
\begin{equation}
\begin{aligned}
    &\sum_{x_{t+1} \in V_g^k}F(P(x_{t+1}|\x_{1:t}),k)\\
    \ge & \sum_{x_{t+1} \in V_g^k}P(x_{t+1}|\x_{1:t})
\end{aligned}
\end{equation}

Detection is formulated as a hypothesis testing problem, with the null hypothesis \(H_0\) that the sequence was generated without watermarking, and the alternative hypothesis \(H_1\) that the sequence was generated with watermarking. Given a generated sequence \(\bm{x}_{1:T}\) and the corresponding recovered keys \(\bm{k}^{1:T}\), the detector computes the \emph{green ratio}
\vspace{-1mm}
\begin{equation}
    G = \frac{1}{T} \sum_{i=1}^T \mathbbm{1}\!\left(x_i \in V_g^{k^i}\right),
\end{equation}
i.e., the empirical fraction of tokens belonging to the green list. Under the assumption of independent token sampling, the corresponding z-score is computed as
\vspace{-1mm}
\begin{equation}
    z = \frac{(G - \gamma)T}{\sqrt{T\gamma(1-\gamma)}}.
\end{equation}
We refer readers to~\citet{kirchenbauer2023watermark} for a detailed derivation and analysis.

\vspace{-1mm}
\paragraph{Multi-Layer Detection.}
For an \(n\)-layer watermark ensemble, the single-layer detection framework can be naturally extended under the assumption that watermark keys across layers are independently and identically distributed. Let \(G_i\) denote the green ratio associated with the \(i\)-th layer:
\vspace{-1mm}
\begin{equation}
    G_i = \frac{1}{T} \sum_{j=1}^T \mathbbm{1}\!\left(x_j \in V_g^{k_i^j}\right), \quad i = 1, \ldots, n.
\end{equation}
Aggregating across all layers, the corresponding multi-layer z-score is given by
\begin{equation}
    z = \frac{\left(\sum_{i=1}^n G_i - \gamma n\right)T}{\sqrt{nT\gamma(1-\gamma)}}.
    \label{eq:multilayer-z-score}
\end{equation}

Finally, for a single-layer watermark $F$, we characterize the expected probability that the next generated token belongs to the green list under the original model distribution \(P_M(\cdot \mid \bm{x}_{1:t})\). We define the \emph{expected green ratio} function \(g_F : \mathcal{P} \rightarrow [0,1]\) as
\vspace{-1mm}
\begin{equation}
\begin{aligned}
    &g_{F}\!\left(P_M(\cdot \mid \bm{x}_{1:t})\right)\\
    =& \mathbb{E}_{k \sim P_{\mathcal{K}}}
    \sum_{x_{t+1} \in V_g^k}
    F\!\left(P_M(x_{t+1} \mid \bm{x}_{1:t}), k\right),
\end{aligned}
\end{equation}
where the expectation is taken over the watermark key distribution \(P_{\mathcal{K}}\).

\section{Rethinking Distortion-Free Ensemble Watermarks from an Entropy Perspective}

\subsection{Larger Entropy Leads to Better Detection Performance}

It is well established that the detectability of probabilistic watermarks depends critically on the uncertainty of the original token distribution prior to watermarking~\citep{hu2023unbiased,dathathri2024scalable,giboulot2024watermax}. Intuitively, when the model output distribution exhibits higher entropy, watermark-induced perturbations are less likely to be masked by deterministic biases of the base model, thereby yielding more reliable statistical signals for detection.

We formalize this intuition by analyzing the expected green-list ratio for each watermark method. Under the SynthID watermarking scheme~\citep{dathathri2024scalable}, we show that the expected green ratio under SynthID can be expressed as
\vspace{-1.5mm}
\begin{equation}
\begin{aligned}
     &g_\mathrm{SynthID}(P_M(\cdot \mid \bm{x}_{1:t}))\\
     =& \frac{3}{4}
     - \frac{1}{4} \sum_{x_{t+1}\in V}
     \big(P_M(x_{t+1} \mid \bm{x}_{1:t})\big)^2 \\
     \leq& \frac{3}{4}
     - \frac{1}{4}\exp\!\big(-H(P_M(\cdot \mid \bm{x}_{1:t}))\big),
\end{aligned}
\label{eq:synthid_expected_green_ratio}
\end{equation}
where
\(
H(P_M(\cdot \mid \bm{x}_{1:t}))
= -\sum_{x_{t+1}\in V}
P_M(x_{t+1} \mid \bm{x}_{1:t})
\log P_M(x_{t+1} \mid \bm{x}_{1:t})
\)
denotes the Shannon entropy of the model’s output distribution. A detailed derivation is provided in Appendix~\ref{appendix:expected_green}.

For other distortion-free ensemble watermarking schemes, deriving an analogous closed-form expression is generally intractable due to more complex reweighting mechanisms. Nevertheless, in Sec.~\ref{sec:entropy_vs_green}, we empirically demonstrate a consistent and strong positive correlation between the entropy of the base distribution and the expected green ratio across a variety of watermark designs, supporting the generality of this entropy-based perspective.

\subsection{Watermarks Reduce Entropy and Expected Green Ratio in Subsequent Layers}
\label{sec:wm_reduce_entropy}

We theoretically identify an important phenomenon: \emph{distortion-free watermarking inevitably reduces both the entropy of the token distribution and the expected green ratio in subsequent layers}. This effect compounds across layers, progressively weakening watermark detectability.

\begin{theorem}[Entropy Decrease under Distortion-Free Watermarking]
\label{thm:entropy-decreasing}
Let $F$ denote a distortion-free watermarking operator with private key $k \sim P_{\mathcal{K}}$. Then, in expectation over the watermark key, the Shannon entropy of the token distribution after watermarking does not increase:
\begin{equation}
\begin{aligned}
    &\mathbb{E}_{k \sim P_{\mathcal{K}}}
    \big[ H(F(P_M(\cdot \mid \bm{x}_{1:t}), k)) \big]\\
    \le&
    H(P_M(\cdot \mid \bm{x}_{1:t})) .
\end{aligned}
\end{equation}
\end{theorem}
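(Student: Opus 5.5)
This is concavity of Shannon entropy combined with Jensen's inequality, using the distortion-free property \eqref{eq:unbiased-F} as the identity that links the average of the watermarked distributions back to $P_M$.

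Fix the context $\bm{x}_{1:t}$ and write $P = P_M(\cdot \mid \bm{x}_{1:t})$ and $Q_k = F(P, k)$ for each key $k$. Each $Q_k$ lies in $\mathcal{P}$, which is a convex subset of $\mathbb{R}^N$. Entropy $H(Q) = -\sum_{x\in V} Q(x)\log Q(x)$ is concave on $\mathcal{P}$. To see this, note that $\phi(u) = -u\log u$ (with $\phi(0)=0$) is concave on $[0,1]$ because $\phi''(u) = -1/u < 0$ for $u>0$ and $\phi$ is continuous at $0$. A sum of concave functions of the individual coordinates is concave.

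The main step is Jensen's inequality for the random element $Q_k$ with $k \sim P_{\mathcal{K}}$:
\[
\mathbb{E}_{k}\big[H(Q_k)\big] \le H\big(\mathbb{E}_{k}[Q_k]\big).
\]
By the distortion-free condition \eqref{eq:unbiased-F}, $\mathbb{E}_k[Q_k](x) = P(x)$ for every $x \in V$, so the right-hand side equals $H(P)$. This gives the claim.

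The only point needing care is applying Jensen when $P_{\mathcal{K}}$ may be a general, possibly continuous, distribution. The cleanest route is coordinatewise. For each $x$, apply scalar Jensen to the concave function $\phi$ and the bounded random variable $Q_k(x) \in [0,1]$, which gives $\mathbb{E}_k[\phi(Q_k(x))] \le \phi(\mathbb{E}_k[Q_k(x)]) = \phi(P(x))$. Then sum over the finite vocabulary $V$. Boundedness of $\phi$ on $[0,1]$ ensures all expectations exist, and measurability of $k \mapsto Q_k(x)$ is implicit in \eqref{eq:unbiased-F} being well defined.

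As a remark, the inequality is strict unless $Q_k = P$ almost surely, which follows from the strict concavity of $\phi$. The same argument applied layer by layer, conditioning on $\bm{k}_{1:i-1}$ and using the tower property, yields the monotone decrease of expected entropy across ensemble layers.
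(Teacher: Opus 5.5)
Your proof is correct and follows the paper's argument: concavity of Shannon entropy plus Jensen's inequality, with the distortion-free identity $\mathbb{E}_{k}[F(P,k)] = P$ identifying the averaged distribution. Your coordinatewise application of scalar Jensen to $\phi(u) = -u\log u$ is a careful way to handle general key distributions, which the paper leaves implicit.
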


\begin{proof}
Since Shannon entropy is a concave functional over the probability simplex, Jensen’s inequality yields
\begin{equation}
\begin{aligned}
    &H(P_M(\cdot \mid \bm{x}_{1:t}))\\
    =& H\!\left( \mathbb{E}_{k \sim P_{\mathcal{K}}}
        \big[ F(P_M(\cdot \mid \bm{x}_{1:t}), k) \big] \right) \\
    \ge&
    \mathbb{E}_{k \sim P_{\mathcal{K}}}
    \big[ H(F(P_M(\cdot \mid \bm{x}_{1:t}), k)) \big],
\end{aligned}
\end{equation}
which completes the proof.
\end{proof}

\begin{theorem}[Expected Green Ratio Decrease]
\label{thm:green-ratio-decreasing}
Under expectation over the watermark key, the expected green ratio does not increase after applying a distortion-free watermark:
\begin{equation}
\begin{aligned}
        &\mathbb{E}_{k \sim P_{\mathcal{K}}}
    \big[ g(F(P_M(\cdot \mid \bm{x}_{1:t}), k)) \big]\\
    \le&
    g(P_M(\cdot \mid \bm{x}_{1:t})) .
\end{aligned}
\end{equation}
\end{theorem}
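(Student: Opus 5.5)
The plan is to mirror the proof of Theorem~\ref{thm:entropy-decreasing}: replace the Shannon entropy $H$ by the expected green ratio functional $g = g_F : \mathcal{P} \to [0,1]$, and apply Jensen's inequality together with the distortion-free identity \eqref{eq:unbiased-F}. Writing $P = P_M(\cdot \mid \bm{x}_{1:t})$, distortion-freeness gives $P = \mathbb{E}_{k}[F(P,k)]$. Hence, if $g$ is concave on the simplex $\mathcal{P}$,
\begin{equation*}
g(P) = g\big(\mathbb{E}_{k}[F(P,k)]\big) \ \ge\ \mathbb{E}_{k}\big[g(F(P,k))\big],
\end{equation*}
which is exactly the claim. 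The keys used to form $g$, namely the fresh key $k' \sim P_{\mathcal{K}}$ inside $g_F$, are independent of the key $k$ used to produce $F(P,k)$. This matches the ensemble setting, where layer $i+1$ uses a fresh i.i.d.\ key, so the two expectations can be treated separately.

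The whole argument therefore reduces to establishing concavity of $g_F$ in its distribution argument, and this is the step I expect to be the main obstacle. For SynthID it is immediate from the closed form \eqref{eq:synthid_expected_green_ratio}: $g_{\mathrm{SynthID}}(P) = \tfrac34 - \tfrac14 \sum_x P(x)^2$. Since $P \mapsto \sum_x P(x)^2$ is convex, $g_{\mathrm{SynthID}}$ is concave. Combining this with the Jensen step yields the inequality, and in fact an explicit gap,
\begin{equation*}
g(P) - \mathbb{E}_{k}\big[g(F(P,k))\big] = \tfrac14\, \mathbb{E}_{k}\Big[\textstyle\sum_x \big(F(P,k)(x) - P(x)\big)^2\Big] \ \ge\ 0,
\end{equation*}
which is a variance identity.

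For a general distortion-free $F$, I would first try to show concavity directly. For a fixed key $k'$, write the green mass $P \mapsto \sum_{x \in V_g^{k'}} F(P,k')(x)$. For the common reweighting schemes ($\gamma$-reweight/DiPmark-type, tournament sampling), this quantity is a pointwise minimum or a composition of affine maps with concave nondecreasing clipping functions (e.g.\ $\min(1, 2p)$-type terms) evaluated on partial sums of $P$. Such a function is concave in $P$, and averaging over $k'$ preserves concavity. If a fully general proof is not available, I would state concavity of $g_F$ as the structural hypothesis under which the theorem holds. I would verify it for the specific schemes considered (SynthID exactly, others through the clipping representation) and point to the empirical evidence in Sec.~\ref{sec:entropy_vs_green} for the remaining cases.

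Finally, I would remark that iterating the inequality along the recursion \eqref{eq:ensemble-def}, conditioning on $\bm{k}_{1:i-1}$ at layer $i$, gives monotone non-increase of the expected green ratio across layers. This is the compounding effect described in Sec.~\ref{sec:wm_reduce_entropy}.
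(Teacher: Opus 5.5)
Your proposal is correct and follows essentially the same route as the paper. Both write $P=\mathbb{E}_{k}[F(P,k)]$ by distortion-freeness, apply Jensen's inequality to $g$, and reduce the claim to concavity of $g$. The paper likewise verifies that concavity scheme by scheme: via the closed form for SynthID, and via $\min(1,\cdot)$-of-linear representations for DiPmark and MCMark in the appendix. Your explicit variance identity for the SynthID gap is a correct and slightly sharper addition.
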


\begin{proof}[Proof sketch]
The argument parallels Theorem~\ref{thm:entropy-decreasing}. If the expected green ratio functional $g(\cdot)$ is concave with respect to the token distribution, then Jensen’s inequality directly implies the result.

For SynthID, Eq.~\ref{eq:synthid_expected_green_ratio} admits a closed-form expression which is concave. For other representative distortion-free watermarking schemes, we also theoretically verify concavity of their expected green-ratio functions in Appendix~\ref{appendix:expected_green}.
\end{proof}

\paragraph{Implication.}
Taken together, Theorems~\ref{thm:entropy-decreasing} and~\ref{thm:green-ratio-decreasing} imply that watermark ensemble generation inevitably induces a monotonic reduction in both entropy and expected green ratio across successive layers. As a consequence, the statistical signal exploited by downstream detectors progressively diminishes, fundamentally limiting watermark detectability.

\subsection{A General Weaker Distortion-Free Watermark Framework}
\label{sec:weaker_df_wm}

A direct approach to mitigating the degradation of watermark detectability across layers is to reduce the entropy loss induced at each step, thereby maintaining a larger expected green ratio in subsequent layers. This motivates the design of a \emph{weaker} distortion-free watermark that perturbs the base token distribution more mildly.

Concretely, we construct a family of weaker watermarking functions by interpolating between the original model distribution and a distortion-free watermarked distribution. Let $F$ denote a distortion-free watermarking function. For a mixing coefficient $\lambda \in [0,1]$, we define
\vspace{-1mm}
\begin{equation}
\label{eq:weaker-watermark}
\begin{aligned}
&F_\lambda(P_M(\cdot \mid \bm{x}_{1:t}), k)\\
:=\;&
\lambda\, F(P_M(\cdot \mid \bm{x}_{1:t}), k)
+
(1-\lambda)\, P_M(\cdot \mid \bm{x}_{1:t}).
\end{aligned}
\end{equation}

The parameter $\lambda$ controls the watermark strength. $\lambda=1$ recovers the original watermark $F$, while smaller values of $\lambda$ yield distributions closer to the base model and thus induce smaller perturbations. We show that this interpolation preserves distortion-freeness and in expectation improves entropy relative to directly applying $F$, which in turn helps sustain detectability in later layers.

\begin{theorem}[Distortion-Freeness]
\label{thm:distortion-free}
If $F$ is distortion-free, then for any $\lambda \in [0,1]$, $F_\lambda$ is also distortion-free. In particular, for any token $x_{t+1}$,
\vspace{-1mm}
\begin{equation}
\begin{aligned}
    &\mathbb{E}_{k \sim P_{\mathcal{K}}}
\Big[
F_\lambda(P_M(x_{t+1} \mid \bm{x}_{1:t}), k)
\Big]\\
=&
P_M(x_{t+1} \mid \bm{x}_{1:t}).
\end{aligned}
\end{equation}
\end{theorem}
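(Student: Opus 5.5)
The argument rests entirely on linearity of expectation combined with the distortion-free property of $F$ in Eq.~\eqref{eq:unbiased-F}. Fix a prefix $\bm{x}_{1:t}$ and a token $x_{t+1}\in V$. The key observation is that the second term in the definition of $F_\lambda$ in Eq.~\eqref{eq:weaker-watermark}, namely $(1-\lambda)P_M(x_{t+1}\mid \bm{x}_{1:t})$, does not depend on the key $k$. Since $\lambda$ is a deterministic constant, taking expectation over $k\sim P_{\mathcal{K}}$ and using linearity gives
\begin{equation*}
\begin{aligned}
&\mathbb{E}_{k}\big[F_\lambda(P_M(x_{t+1}\mid \bm{x}_{1:t}),k)\big]\\
=\;&\lambda\,\mathbb{E}_{k}\big[F(P_M(x_{t+1}\mid \bm{x}_{1:t}),k)\big]\\
&+(1-\lambda)P_M(x_{t+1}\mid \bm{x}_{1:t}).
\end{aligned}
\end{equation*}
Next we substitute Eq.~\eqref{eq:unbiased-F} into the first term, which replaces the expectation by $P_M(x_{t+1}\mid \bm{x}_{1:t})$. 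The right-hand side then collapses to $\lambda P_M+(1-\lambda)P_M=P_M(x_{t+1}\mid \bm{x}_{1:t})$, which is the claimed identity.

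There is one sanity check to carry out beforehand: $F_\lambda$ must be a valid watermarking function, meaning it maps into $\mathcal{P}$. For $\lambda\in[0,1]$, $F_\lambda(P_M,k)$ is a convex combination of the two probability distributions $F(P_M,k)$ and $P_M$. It is therefore nonnegative and sums to one over $V$, so it belongs to $\mathcal{P}$, and the distortion-free notion applies to it.

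No step here is a genuine obstacle. The only point needing care is the implicit measurability and integrability of $k\mapsto F(P_M,k)(x_{t+1})$, which is needed for the expectation to be well defined. This holds automatically because the values lie in $[0,1]$ and because Eq.~\eqref{eq:unbiased-F} already presupposes that this expectation exists.

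As a remark, combining this result with the ensemble statement in Eq.~\eqref{eq:unbiased-ensemble} shows that stacking $F_\lambda$ layers with independent keys also remains distortion-free.
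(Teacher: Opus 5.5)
Your proof is correct and is essentially the paper's argument: linearity of expectation separates out the key-independent term $(1-\lambda)P_M$, and the distortion-free property of $F$ reduces the remaining term to $\lambda P_M$. Your extra checks, that $F_\lambda$ maps into $\mathcal{P}$ and that the expectation is well defined, are sound additions that the paper leaves implicit.
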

By linearity of expectation, the proof is straightforward, and we leave it in Appendix~\ref{appendix:weaker_wm}.

\begin{theorem}[Entropy Preservation]
\label{thm:entropy-preserve}
Assume $F$ is distortion-free. Then for any $\lambda \in [0,1]$,
\begin{equation}
\begin{aligned}
    &\mathbb{E}_{k \sim P_{\mathcal{K}}}
\Big[
H\!\big(F_\lambda(P_M(\cdot \mid \bm{x}_{1:t}), k)\big)
\Big]\\
\;\ge\;&
\mathbb{E}_{k \sim P_{\mathcal{K}}}
\Big[
H\!\big(F(P_M(\cdot \mid \bm{x}_{1:t}), k)\big)
\Big].
\end{aligned}
\end{equation}
\end{theorem}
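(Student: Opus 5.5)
Write $P=P_M(\cdot\mid\bm{x}_{1:t})$ and $Q_k=F(P,k)$, so that $F_\lambda(P,k)=\lambda Q_k+(1-\lambda)P$. The plan is to use concavity of Shannon entropy pointwise in $k$, and then average over the key using the distortion-free identity $\mathbb{E}_k[Q_k]=P$ together with Theorem~\ref{thm:entropy-decreasing}.

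\emph{Pointwise step.} For each fixed key $k$, $F_\lambda(P,k)$ is a convex combination of the two distributions $Q_k$ and $P$. By concavity of $H$ on the simplex,
\begin{equation*}
H\big(F_\lambda(P,k)\big)\;\ge\;\lambda H(Q_k)+(1-\lambda)H(P).
\end{equation*}

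\emph{Averaging step.} Taking expectation over $k\sim P_{\mathcal{K}}$ gives
\begin{equation*}
\mathbb{E}_k\big[H(F_\lambda(P,k))\big]\;\ge\;\lambda\,\mathbb{E}_k[H(Q_k)]+(1-\lambda)H(P).
\end{equation*}

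\emph{Comparison step.} Since $F$ is distortion-free, Theorem~\ref{thm:entropy-decreasing} gives $H(P)\ge \mathbb{E}_k[H(Q_k)]$. Substituting this into the second term of the averaged bound yields
\begin{equation*}
\mathbb{E}_k\big[H(F_\lambda(P,k))\big]\;\ge\;\mathbb{E}_k[H(Q_k)],
\end{equation*}
which is exactly the claim. The same chain also gives the stronger bound $\mathbb{E}_k[H(F_\lambda)]\ge \lambda\,\mathbb{E}_k[H(F)]+(1-\lambda)H(P)$, i.e., the expected entropy interpolates at least linearly in $\lambda$ between $\mathbb{E}_k[H(F)]$ and $H(P)$.

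\emph{Technical points.} Nothing here is genuinely hard. The concavity inequality is applied pointwise in $k$ and then integrated, so no measurability issue arises beyond the existence of the expectation. That expectation is finite because $0\le H\le \log N$ on a finite vocabulary. The endpoint cases $\lambda=0$ (which reduces to Theorem~\ref{thm:entropy-decreasing}) and $\lambda=1$ (equality) are covered by the same argument.
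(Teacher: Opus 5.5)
Your proposal is correct and matches the paper's proof step for step: concavity of $H$ applied for each fixed key, then averaging over $k$, then the Jensen bound $\mathbb{E}_k[H(F(P,k))]\le H(P)$ from distortion-freeness (Theorem~\ref{thm:entropy-decreasing}) to absorb the $(1-\lambda)H(P)$ term. The stronger interpolation bound you note, $\lambda\,\mathbb{E}_k[H(F)]+(1-\lambda)H(P)$, is the same intermediate inequality the paper derives.
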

\begin{proof}[Proof Sketch]
For each fixed key $k$, $F_\lambda(P_M(\cdot \mid \bm{x}_{1:t}),k)$ is a convex combination of
$F(P_M(\cdot \mid \bm{x}_{1:t}),k)$ and $P_M(\cdot \mid \bm{x}_{1:t})$.
Since Shannon entropy is concave on the probability simplex,
\[
H\!\big(F_\lambda(\cdot)\big)
\;\ge\;
\lambda\, H\!\big(F(\cdot)\big)
+
(1-\lambda)\, H\!\big(P_M(\cdot)\big).
\]
Taking expectation over $k$ and using distortion-freeness to relate $P_M$ to the key-average behavior yields the stated inequality. Full details are deferred to Appendix~\ref{appendix:weaker_wm}.
\end{proof}

The proposed interpolation in~\eqref{eq:weaker-watermark} offers a principled framework for modulating watermark strength without violating the distortion-free constraint. In this context, we observe a critical trade-off between immediate detectability and long-term sustainability: maximizing the watermark signal at the current step often collapses the entropy necessary for embedding signals in future steps. Our proposed weaker watermark, $F_\lambda$, directly addresses this tension. By reducing entropy degradation at the current step, $F_\lambda$ preserves a higher expected green ratio for subsequent layers, effectively counteracting the phenomenon of detectability decay.

\section{Experiments}

\begin{table*}[htbp]
\centering
\caption{Watermark detection performance with Llama-3.2-3B-Instruct and Mistral-7B-Instruct-v0.3 on the C4 dataset under different token budgets. We report TPR at fixed FPR thresholds (0.1\%, 0.01\%, 0.001\%) and the median detection p-value across samples. Lower p-values and higher TPR indicate stronger detectability.}
\vspace{-1mm}
\label{tab:main_exp}
\setlength{\tabcolsep}{2pt}
\resizebox{\textwidth}{!}{
\begin{tabular}{@{}ll|cccc|cccc@{}}
\toprule
 &  & \multicolumn{4}{c|}{150 tokens} & \multicolumn{4}{c}{250 tokens} \\ \midrule
 &  & \multicolumn{3}{c|}{TPR@FPR} & \multirow{2}{*}{\begin{tabular}[c]{@{}c@{}}Median\\ p-value\end{tabular}$\downarrow$} & \multicolumn{3}{c|}{TPR@FPR} & \multirow{2}{*}{\begin{tabular}[c]{@{}c@{}}Median\\ p-value\end{tabular}$\downarrow$} \\ \cmidrule(lr){3-5} \cmidrule(lr){7-9}
 &  & 0.1\%$\uparrow$ & 0.01\%$\uparrow$ & \multicolumn{1}{c|}{0.001\%$\uparrow$} &  & 0.1\%$\uparrow$ & 0.01\%$\uparrow$ & \multicolumn{1}{c|}{0.001\%$\uparrow$} &  \\ \midrule
\multicolumn{1}{l|}{\multirow{6}{*}{Llama3-3B}} & ENS-DiPmark ($\alpha$=0.5) & 52.8\% & 42.9\% & 35.2\% & 7.34e-4 & 74.0\% & 64.1\% & 54.9\% & 2.04e-6 \\
\multicolumn{1}{l|}{} & ENS-DiPmark ($\alpha$=0.4) & \textbf{55.5\%} & \textbf{45.2\%} & \textbf{38.2\%} & \textbf{4.20e-4} & \textbf{74.7\%} & \textbf{67.2\%} & \textbf{58.5\%} & \textbf{5.36e-7} \\ \cmidrule(l){2-10}
\multicolumn{1}{l|}{} & SynthID ($\lambda$=1) & 75.1\% & 68.3\% & 61.9\% & 1.33e-7 & 93.8\% & 88.4\% & 83.5\% & 1.91e-12 \\
\multicolumn{1}{l|}{} & SynthID ($\lambda$=0.8) & \textbf{79.7\%} & \textbf{74.5\%} & \textbf{67.1\%} & \textbf{1.57e-8} & \textbf{94.5\%} & \textbf{90.5\%} & \textbf{86.1\%} & \textbf{1.69e-13} \\ \cmidrule(l){2-10}
\multicolumn{1}{l|}{} & ENS-MCMark ($\lambda$=1) & 77.2\% & 71.4\% & 64.1\% & \textbf{7.80e-9} & 95.1\% & 91.8\% & 87.3\% & 8.75e-15 \\
\multicolumn{1}{l|}{} & ENS-MCMark ($\lambda$=0.8) & \textbf{78.5\%} & \textbf{71.6\%} & \textbf{65.2\%} & \textbf{7.80e-9} & \textbf{96.7\%} & \textbf{93.7\%} & \textbf{88.5\%} & \textbf{3.91e-15} \\ \midrule
\multicolumn{1}{l|}{\multirow{6}{*}{Mistral-7B}} & ENS-DiPmark ($\alpha$=0.5) & 22.6\% & 11.7\% & 6.3\% & 2.60e-2 & 46.8\% & 27.8\% & 19.1\% & 2.13e-3 \\
\multicolumn{1}{l|}{} & ENS-DiPmark ($\alpha$=0.4) & \textbf{25.4\%} & \textbf{13.7\%} & \textbf{6.4\%} & \textbf{1.37e-2} & \textbf{48.3\%} & \textbf{33.6\%} & \textbf{20.7\%} & \textbf{1.42e-3} \\ \cmidrule(l){2-10}
\multicolumn{1}{l|}{} & SynthID ($\lambda$=1) & 73.1\% & 58.8\% & 45.8\% & 2.06e-5 & 91.9\% & 83.2\% & 75.0\% & 3.23e-8 \\
\multicolumn{1}{l|}{} & SynthID ($\lambda$=0.8) & \textbf{78.8\%} & \textbf{64.3\%} & \textbf{49.4\%} & \textbf{1.06e-5} & \textbf{93.6\%} & \textbf{87.4\%} & \textbf{78.0\%} & \textbf{2.19e-8} \\ \cmidrule(l){2-10}
\multicolumn{1}{l|}{} & ENS-MCMark ($\lambda$=1) & 78.5\% & 67.1\% & 52.4\% & 9.11e-6 & 93.3\% & 87.5\% & 80.5\% & 7.32e-9 \\
\multicolumn{1}{l|}{} & ENS-MCMark ($\lambda$=0.8) & \textbf{82.2\%} & \textbf{72.2\%} & \textbf{57.6\%} & \textbf{2.50e-6} & \textbf{93.6\%} & \textbf{88.8\%} & \textbf{82.1\%} & \textbf{2.03e-9} \\ \bottomrule
\end{tabular}
}
\end{table*}

\begin{figure}[!h]

    \centering
    \includegraphics[width=\linewidth]{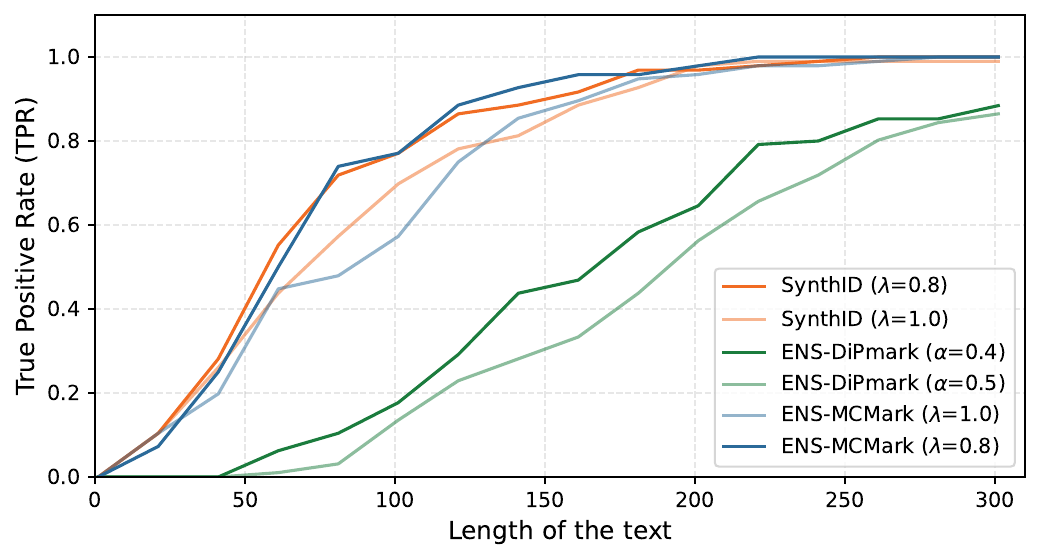}
    \caption{Watermark detectability as a function of text length on the MMW Story dataset using Llama-3.2-3B-Instruct. The threshold for false positive rate is set to 0.01\%. Weaker ensemble watermarks yield consistently higher detection performance.}
    \label{fig:res_vs_len}
    \vspace{-3mm}
\end{figure}

\subsection{Settings and Hyperparameters}

For distortion-free watermarking, we leverage SynthID~\citep{dathathri2024scalable}, ENS-DiPmark~\citep{wu2023dipmark,wu2026ensemble}, and ENS-MCMark~\citep{chen2025improved,wu2026ensemble}. We use the default hyperparameter settings for each method: for SynthID, we set $m=30$; for ENS-DiPmark, we set $n=5$; and for ENS-MCMark, we set $n=5$ and $l=20$, resulting in 30 layers for SynthID and 5 layers for both ENS-DiPmark and ENS-MCMark. To further explore the effect of the number of layers, we conduct an ablation study in Appendix~\ref{appendix:number_of_layers}.

Following prior works, we report the True Positive Rate (TPR) at specific controlled False Positive Rates (FPR), using values of 0.1\%, 0.01\%, and 0.001\% in our experiments. Additionally, we present the median p-values for the dataset.

Regarding watermark strength, ENS-DiPmark includes a hyperparameter $\alpha$ that controls the watermark strength. We use $\alpha = 0.5$ and $\alpha = 0.4$ for our experiments, as $\alpha = 0.5$ yields the best performance for single-layer DiPmark. For SynthID and ENS-MCMark, since no explicit hyperparameter is available to control watermark strength, we employ our proposed weaker watermark framework, as defined in Eq.~\eqref{eq:weaker-watermark}. In our main experiments, we test $\lambda = 1$ and $\lambda = 0.8$. Details of models and datasets used are shown in Appendix~\ref{appendix:models_datasets}.

\subsection{Experimental Results}

We report watermark detection performance across two backbone models and two generation lengths (150 and 250 tokens) in Tab.~\ref{tab:main_exp}. For distortion-free ensemble methods, weakening the watermark strength (reducing $\alpha$ in ENS-DiPmark or $\lambda$ in ENS-MCMark and SynthID) consistently improves detection performance, indicating that milder perturbations preserve more entropy and yield stronger downstream statistical evidence.
Fig.~\ref{fig:res_vs_len} illustrates how watermark detectability evolves as a function of text length. We observe that weaker watermark configurations consistently dominate their stronger counterparts, highlighting the benefit of preserving token entropy. Additional results on more datasets are provided in App.~\ref{appendix:res_vs_len} in Fig.~\ref{fig:res_vs_len_full}. These trends are consistent across models, datasets and watermark schemes, demonstrating the robustness and generality of the proposed weaker ensemble watermarking framework.

\begin{figure*}[!h]
    \centering
    \begin{subfigure}{0.32\textwidth}
        \centering
        \includegraphics[width=\linewidth]{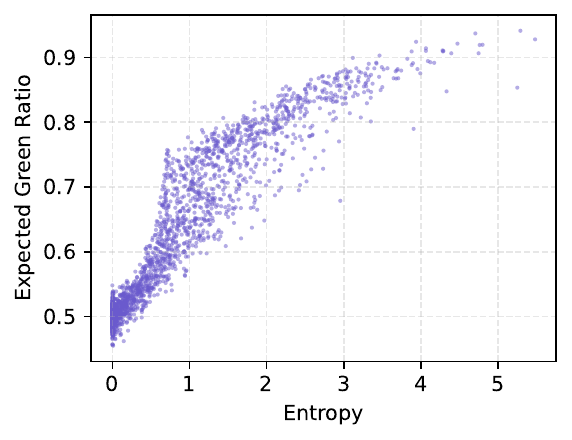}
        \caption{ENS-DiPmark}
        \label{fig:sub1}
    \end{subfigure}
    \hfill
    \begin{subfigure}{0.32\textwidth}
        \centering
        \includegraphics[width=\linewidth]{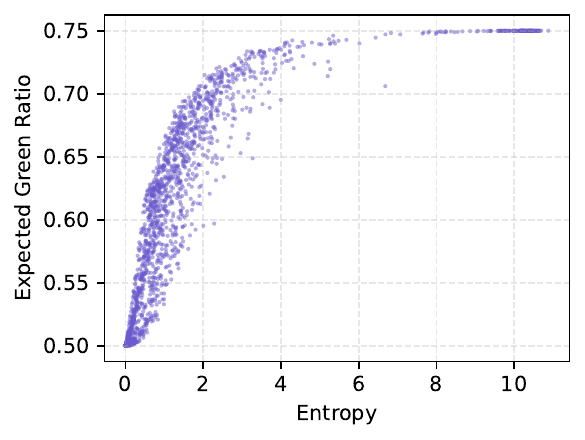}
        \caption{SynthID}
        \label{fig:sub2}
    \end{subfigure}
    \hfill
    \begin{subfigure}{0.32\textwidth}
        \centering
        \includegraphics[width=\linewidth]{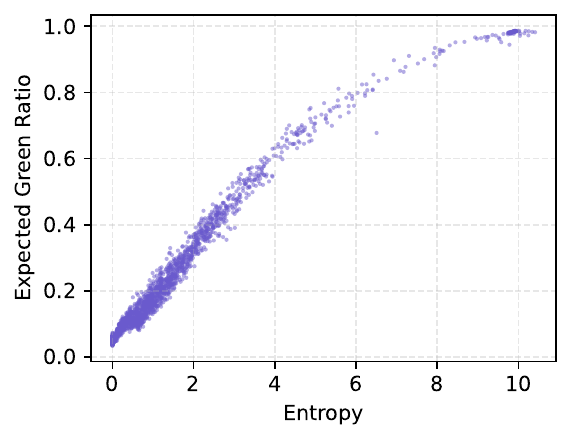}
        \caption{ENS-MCMark}
        \label{fig:sub3}
    \end{subfigure}
    \caption{Correlation between token-distribution entropy and expected green ratio. Results are computed on the MMW Story dataset with Llama3.2-3B-Instruct using 2000 randomly sampled tokens, showing a strong positive association between entropy and expected green ratio.}
    \vspace{-3mm}
    \label{fig:entropy_vs_expected_green_ratio}
\end{figure*}

\section{Analysis}
\begin{figure}[!h]
    \centering
    \begin{subfigure}{0.95\linewidth}
        \centering
        \includegraphics[width=\linewidth]{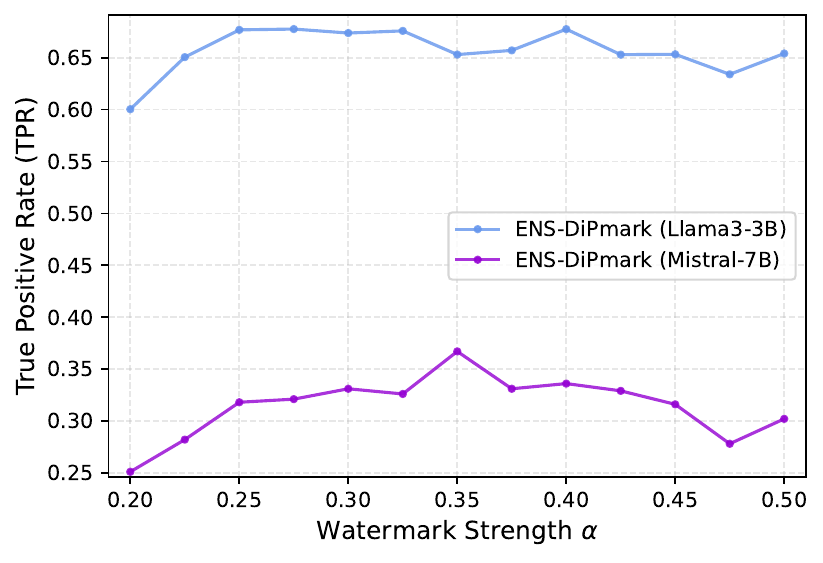}
        \caption{ENS-DiPmark}
        \label{fig:sub1}
    \end{subfigure}
    \\
    \begin{subfigure}{0.95\linewidth}
        \centering
        \includegraphics[width=\linewidth]{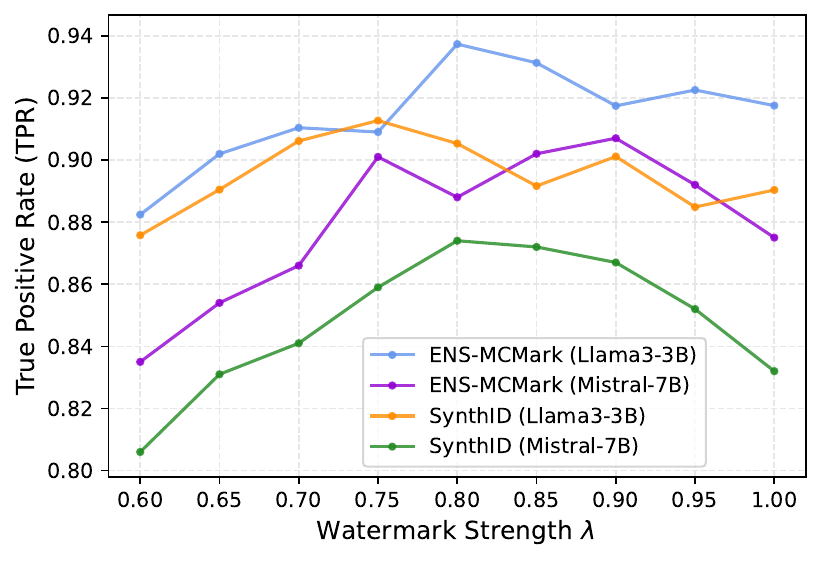}
        \caption{SynthID and ENS-MCMark}
        \label{fig:sub2}
    \end{subfigure}
    \vspace{-1mm}
    \caption{Ablation of watermark strength on detection performance on the C4 dataset with false positive rate set to 0.01\% and token length set to 250. True positive rate (TPR) is reported for varying strength parameters $\alpha$ (ENS-DiPmark) and $\lambda$ (SynthID, ENS-MCMark) on Llama3-3B and Mistral-7B. Moderate weakening consistently yields the best detectability.}
    \vspace{-3mm}
    \label{fig:ablation}
\end{figure}

\subsection{Effect of Watermark Strength}

We study the effect of watermark strength on detection performance in Fig.~\ref{fig:ablation}. Decreasing the watermark strength generally improves the true positive rate, with peak performance attained around $\lambda$=0.8 and $\alpha$=0.4. This behavior indicates that excessively strong perturbations can be detrimental, as they induce excessive entropy reduction that undermines downstream detectability. Importantly, the strength parameters $\alpha$ and $\lambda$ explicitly govern the trade-off between detectability at the current layer and entropy preservation, which in turn affects detectability in subsequent layers. These results provide empirical evidence supporting our central claim that weaker distortion-free watermarking achieves a more favorable balance between immediate detection strength and overall detectability.

\begin{figure*}[!h]
    \centering
    \begin{subfigure}{0.31\textwidth}
        \centering
        \includegraphics[width=\linewidth]{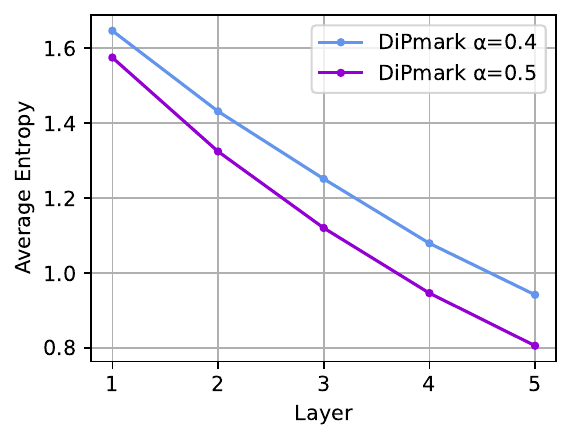}
        \caption{ENS-DiPmark}
        \label{fig:sub1}
    \end{subfigure}
    \hfill
    \begin{subfigure}{0.31\textwidth}
        \centering
        \includegraphics[width=\linewidth]{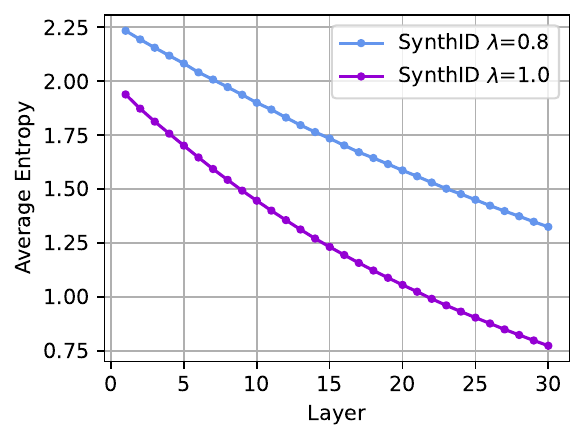}
        \caption{SynthID}
        \label{fig:sub2}
    \end{subfigure}
    \hfill
    \begin{subfigure}{0.31\textwidth}
        \centering
        \includegraphics[width=\linewidth]{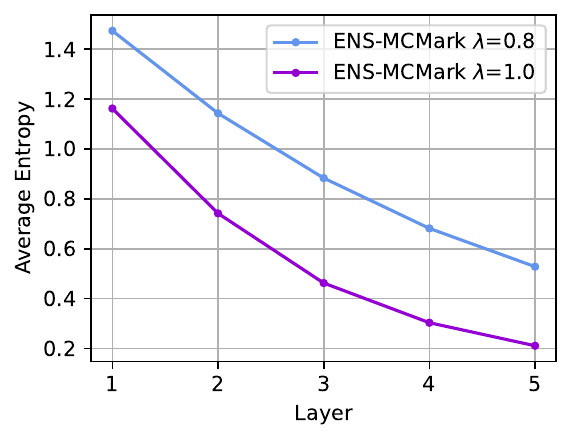}
        \caption{ENS-MCMark}
        \label{fig:sub3}
    \end{subfigure}
    \vspace{-1mm}
    \caption{Average token entropy per layer under different watermark strengths on the C4 dataset using Llama-3.2-3B-Instruct with sequence length fixed to 150. Weaker watermark settings consistently preserve higher entropy across layers.}
    \label{fig:entropy_per_layer}
    \vspace{-2mm}
\end{figure*}

\begin{figure*}[!h]
    \centering
    \begin{subfigure}{0.31\textwidth}
        \centering
        \includegraphics[width=\linewidth]{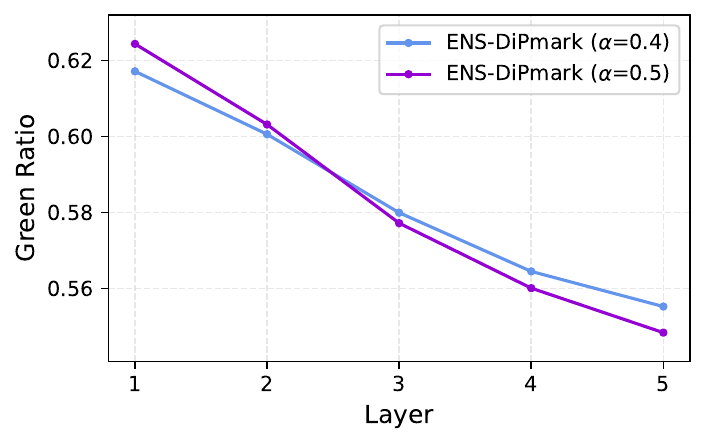}
        \caption{ENS-DiPmark}
        \label{fig:sub1}
    \end{subfigure}
    \hfill
    \begin{subfigure}{0.31\textwidth}
        \centering
        \includegraphics[width=\linewidth]{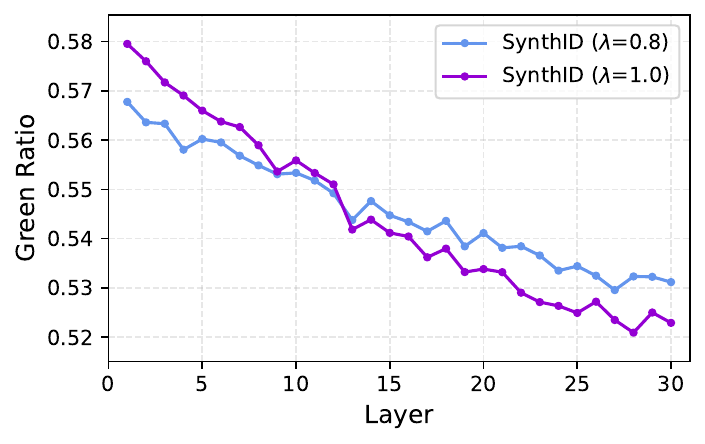}
        \caption{SynthID}
        \label{fig:sub2}
    \end{subfigure}
    \hfill
    \begin{subfigure}{0.31\textwidth}
        \centering
        \includegraphics[width=\linewidth]{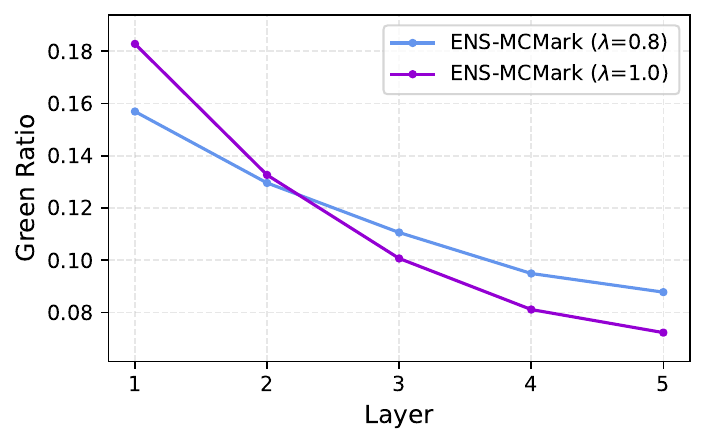}
        \caption{ENS-MCMark}
        \label{fig:sub3}
    \end{subfigure}
    \caption{Actual green ratio $G_i$ per layer under different watermark strengths on the C4 dataset using Llama-3.2-3B-Instruct with sequence length fixed to 150. Higher entropy preservation from weaker watermarking leads to larger green ratios in later layers.}
    \vspace{-2mm}
    \label{fig:green_ratio_per_layer}
\end{figure*}
\subsection{Correlation between Entropy and Expected Green Ratio}
\label{sec:entropy_vs_green}

\begin{table*}[!h]
\centering
\caption{Robustness of watermark detection under four attacks with different watermark strength. We report TPR (at FPR=0.1\%) and median p-value on the C4 dataset using Llama3.2-3B-Instruct with sequence length 250.}
\label{tab:robustness}
\setlength{\tabcolsep}{3pt}
\begin{tabular}{@{}l|cc|cc|cc|cc@{}}
\toprule
 & \multicolumn{2}{c|}{\begin{tabular}[c]{@{}c@{}}Random Token\\ Replacement\end{tabular}} & \multicolumn{2}{c|}{\begin{tabular}[c]{@{}c@{}}GPT\\ Back Translation\end{tabular}} & \multicolumn{2}{c|}{\begin{tabular}[c]{@{}c@{}}GPT\\ Rephrase\end{tabular}} & \multicolumn{2}{c}{DIPPER} \\ \midrule
 & TPR$\uparrow$ & \begin{tabular}[c]{@{}c@{}}Median\\ p-value\end{tabular}$\downarrow$ & TPR$\uparrow$ & \begin{tabular}[c]{@{}c@{}}Median\\ p-value\end{tabular}$\downarrow$ & TPR$\uparrow$ & \begin{tabular}[c]{@{}c@{}}Median\\ p-value\end{tabular}$\downarrow$ & TPR$\uparrow$ & \begin{tabular}[c]{@{}c@{}}Median\\ p-value\end{tabular}$\downarrow$ \\ \midrule
ENS-DiPmark ($\alpha$=0.5) & 54.2\% & 4.92e-4 & 18.6\% & 6.79e-2 & 1.9\% & 5.27e-1 & 1.6\% & 6.30e-1 \\
ENS-DiPmark ($\alpha$=0.4) & \textbf{55.9\%} & \textbf{3.14e-4} & \textbf{22.4\%} & \textbf{3.92e-2} & \textbf{2.5\%} & \textbf{4.61e-1} & \textbf{2.0\%} & \textbf{5.95e-1} \\ \midrule
SynthID ($\lambda$=1) & 81.1\% & 9.48e-8 & 51.8\% & 6.90e-4 & 12.7\% & 6.82e-2 & 8.7\% & 9.59e-2 \\
SynthID ($\lambda$=0.8) & \textbf{84.8\%} & \textbf{3.44e-8} & \textbf{54.2\%} & \textbf{4.99e-4} & \textbf{13.0\%} & \textbf{6.52e-2} & \textbf{10.3\%} & \textbf{9.21e-2} \\ \midrule
ENS-MCMark ($\lambda$=1) & 85.9\% & 73.2e-9 & 55.3\% & 2.97e-4 & 13.8\% & 6.27e-2 & 12.3\% & 9.92e-2 \\
ENS-MCMark ($\lambda$=0.8) & \textbf{87.5\%} & \textbf{3.88e-9} & \textbf{58.9\%} & \textbf{1.24e-4} & \textbf{16.4\%} & \textbf{3.78e-2} & \textbf{12.7\%} & \textbf{7.93e-2} \\ \bottomrule
\end{tabular}
\end{table*}

We visualize the empirical relationship between the entropy of the pre-watermark token distribution and the resulting expected green ratio for three watermarking schemes in Fig.~\ref{fig:entropy_vs_expected_green_ratio}. Across ENS-DiPmark, SynthID, and ENS-MCMark, we observe a strong positive association: higher-entropy distributions yield systematically larger expected green ratios, whereas low-entropy distributions compress the green ratio toward its baseline level, reducing the statistical margin available for detection. The results provide empirical evidence for our entropy-based perspective: entropy serves as a key determinant of watermark detectability through its control of the expected green ratio.

\subsection{Entropy and Green Ratio in Each Layer}

 We empirically analyze how watermark strength influences the evolution of entropy and green ratio across layers. Fig.~\ref{fig:entropy_per_layer} shows that, for all methods, the token distribution entropy decreases monotonically for each layer. Importantly, weaker watermark settings (smaller $\alpha$ and $\lambda$) consistently preserve higher entropy at each layer, slowing entropy collapse for all the layers. Fig.~\ref{fig:green_ratio_per_layer} further demonstrates that this entropy preservation directly translates into higher expected green ratios in later layers, while stronger settings suffer from accelerated degradation. Together, these results provide empirical support for our theoretical analysis in Sec.~\ref{sec:wm_reduce_entropy}, confirming that watermark strength controls a fundamental trade-off between immediate detectability and long-horizon statistical capacity, and that weaker distortion-free watermarking better preserves entropy and green ratio, thereby enabling stronger detectability in subsequent layers.

\subsection{Robustness Experiments}

We evaluate the robustness of our proposed weaker watermark framework in Tab.~\ref{tab:robustness} under four attacks, including random token replacement, GPT-based back translation, GPT rephrasing~\citep{sadasivan2023can}, and DIPPER~\citep{krishna2023paraphrasing} paraphrasing attack, following the settings of~\citet{wu2026ensemble}. The results further demonstrate that our weaker distortion-free watermark framework does not sacrifice robustness. Instead, it consistently improves detectability under attacks. Across all four perturbations, the weaker configurations achieve higher TPR and smaller median p-values, indicating stronger statistical evidence.

\section{Conclusion}
In this work, we identify a critical limitation in existing watermark ensembles where maximizing single-layer strength paradoxically degrades long-term detectability by reducing the entropy of the token distribution. To address this trade-off, we propose a general framework that utilizes weaker distortion-free watermarks to preserve entropy across layers, thereby maintaining a higher expected green ratio for subsequent layers. Our theoretical proofs and empirical evaluations confirm that this entropy-preserving strategy significantly outperforms strong baselines, achieving superior detectability and robustness in multi-layer settings.

\section*{Limitations}

While our work demonstrates the effectiveness of weaker distortion-free watermarking ensembles, there are a few limitations to consider. First, our experimental evaluation was only conducted on open-source models with sizes ranging from 3B to 7B parameters.

Additionally, we did not explore adaptive strategies that dynamically tune the watermark strength based on the real-time entropy of the token distribution, which could potentially further optimize the trade-off between detectability and signal preservation.

\section*{Acknowledgments}

This work was partially supported by NSF IIS 2347592, 2348169, DBI 2405416, CCF 2348306, CNS 2347617, RISE 2536663.

\bibliography{custom}

\appendix

\section{Expected Green Ratio for Single-Layer Distortion-Free Watermarks}
\label{appendix:expected_green}
In this section, we provide the theoretical proofs for the concavity of the expected green ratio function $g(\cdot)$ and its relationship with entropy for the discussed watermarking schemes. For brevity, we denote $P_M(\cdot|x_{1:t})$ as $P(\cdot)$ when the context is unambiguous.

\subsection{SynthID}

\paragraph{Expected Green Ratio.} As established in prior work~\citep{dathathri2024scalable}, the expected green ratio for the SynthID scheme is given by:\[g_{\text{SynthID}}(P) = \frac{3}{4} - \frac{1}{4} \sum_{x \in V} (P(x))^2.\]

\paragraph{Proof of Concavity.} We aim to show that $g_{\text{SynthID}}(P)$ is a concave function over the probability simplex $\mathcal{P}$. Let $f(P) = \sum_{x \in V} (P(x))^2$. This represents the collision probability (or the squared $L_2$ norm) of the distribution. Consider the function $h(y) = y^2$, which is strictly convex for $y \in \mathbb{R}$. Since $f(P)$ is a sum of convex functions applied to the components of $P$, $f(P)$ is convex with respect to $P$.The expected green ratio is an affine transformation of $f(P)$ with a negative coefficient:

$$g_{\text{SynthID}}(P) = \frac{3}{4} - \frac{1}{4} f(P).$$

Since multiplying a convex function by a negative constant reverses the inequality, $- \frac{1}{4} f(P)$ is concave. Adding a constant ($\frac{3}{4}$) preserves concavity. Therefore, $g_{\text{SynthID}}(P)$ is concave.

\paragraph{Relation with Entropy.} (Proof of Eq. 9) We provide the derivation for the bound:$$g_{\text{SynthID}}(P) \le \frac{3}{4} - \frac{1}{4} \exp(-H(P)).$$

Recall the definition of Shannon entropy $H(P) = - \sum_{x \in V} P(x) \ln P(x)$.We utilize the relationship between the collision probability and entropy. By Jensen's inequality applied to the concave function $\ln(\cdot)$:

\begin{equation}
\begin{aligned}
    &\ln\left( \sum_{x \in V} P(x)^2 \right)\\
    =& \ln \left( \mathbb{E}_{x \sim P} [P(x)] \right) \\
    \ge &\mathbb{E}_{x \sim P} [\ln P(x)].
\end{aligned}
\end{equation}

Substituting the definition of entropy:

$$\mathbb{E}_{x \sim P} [\ln P(x)] = \sum_{x \in V} P(x) \ln P(x) = -H(P).$$

Thus, we have:

$$\ln\left( \sum_{x \in V} P(x)^2 \right) \ge -H(P).$$

Exponentiating both sides (since $e^x$ is monotonically increasing):

$$\sum_{x \in V} P(x)^2 \ge \exp(-H(P)).$$

Now, substituting this inequality back into the expected green ratio equation:

$$g_{\text{SynthID}}(P) = \frac{3}{4} - \frac{1}{4} \sum_{x \in V} (P(x))^2.$$

Multiplying the inequality by $-\frac{1}{4}$ reverses the sign:

$$-\frac{1}{4} \sum_{x \in V} (P(x))^2 \le -\frac{1}{4} \exp(-H(P)).$$

Finally, adding $\frac{3}{4}$ to both sides yields:

$$g_{\text{SynthID}}(P) \le \frac{3}{4} - \frac{1}{4} \exp(-H(P)).$$

This confirms that the expected green ratio is upper-bounded by a function of the entropy.

\subsection{DiPmark}

\begin{table*}[]
\centering
\caption{Ablation study on the number of ensemble layers for ENS-DiPmark using the Llama3-3B and Mistral-7B models. Results are reported as the true positive rate at a false positive rate of 0.01\% with 250 tokens per sample on the C4 dataset.}
\label{tab:layers_dipmark}
\begin{tabular}{@{}l|l|cccccc@{}}
\toprule
\multirow{2}{*}{Model} & \multirow{2}{*}{Watermark} & \multicolumn{6}{c}{Layer Number} \\ \cmidrule(l){3-8}
 &  & 1 & 2 & 3 & 4 & \multicolumn{1}{c|}{5} & Best \\ \midrule
\multirow{2}{*}{Llama3-3B} & ENS-DiPmark ($\alpha$=0.5) & \textbf{40.4\%} & \textbf{59.0\%} & 63.7\% & 63.9\% & \multicolumn{1}{c|}{64.1\%} & 64.1\% \\
 & ENS-DiPmark ($\alpha$=0.4) & 38.57\% & 57.2\% & \textbf{63.9\%} & \textbf{66.9\%} & \multicolumn{1}{c|}{\textbf{67.2\%}} & \textbf{67.2\%} \\ \midrule
\multirow{2}{*}{Mistral-7B} & ENS-DiPmark ($\alpha$=0.5) & \textbf{11.1\%} & \textbf{29.5\%} & 30.4\% & 30.7\% & \multicolumn{1}{c|}{27.8\%} & 30.7\% \\
 & ENS-DiPmark ($\alpha$=0.4) & 9.0\% & 28.2\% & \textbf{33.0\%} & \textbf{35.5\%} & \multicolumn{1}{c|}{\textbf{33.6\%}} & \textbf{35.5\%} \\ \bottomrule
\end{tabular}
\end{table*}

\textbf{\begin{table*}[]
\centering
\caption{Ablation study on the number of ensemble layers for SynthID using the Llama3-3B and Mistral-7B models. Results are reported as the true positive rate at a false positive rate of 0.01\% with 250 tokens per sample on the C4 dataset.}
\label{tab:layers_synthid}
\begin{tabular}{@{}l|c|ccccccc@{}}
\toprule
\multirow{2}{*}{Model} & \multicolumn{1}{l|}{\multirow{2}{*}{Watermark}} & \multicolumn{7}{c}{Layer Number} \\ \cmidrule(l){3-9}
 & \multicolumn{1}{l|}{} & 5 & 10 & 15 & 20 & 25 & \multicolumn{1}{c|}{30} & Best \\ \midrule
\multirow{2}{*}{Llama3-3B} & SynthID ($\lambda$=1) & \textbf{72.2\%} & \textbf{85.4\%} & \textbf{88.0\%} & 88.2\% & 88.6\% & \multicolumn{1}{c|}{88.4\%} & 88.6\% \\
 & SynthID ($\lambda$=0.8) & 60.5\% & 79.8\% & 85.8\% & \textbf{89.1\%} & \textbf{90.4\%} & \multicolumn{1}{c|}{\textbf{90.5\%}} & \textbf{90.5\%} \\ \midrule
\multirow{2}{*}{Mistral-7B} & SynthID ($\lambda$=1) & \textbf{59.0\%} & \textbf{81.4\%} & \textbf{85.2\%} & \textbf{86.0\%} & \textbf{85.5\%} & \multicolumn{1}{c|}{83.2\%} & 86.0\% \\
 & SynthID ($\lambda$=0.8) & 36.1\% & 69.6\% & 78.7\% & 82.9\% & 85.3\% & \multicolumn{1}{c|}{\textbf{87.4\%}} & \textbf{87.4\%} \\ \bottomrule
\end{tabular}
\end{table*}}

\paragraph{Expected Green Ratio.} For DiPmark~\citep{wu2023dipmark}, the expected green ratio is:
$$g_{\text{DiPmark}}(P) = \mathbb{E}_{g} \left[ \min\left(1, \sum_{i=1}^{|V|} P(x_i)g_i+\alpha\right) \right],$$

where $g_i \in \{0,1\}$ determines the partition of the vocabulary, $\sum_i g_i=\frac{|V|}{2}$ and $\alpha$ is a shift parameter.

\paragraph{Proof of Concavity.} We show that $g_{\text{DiPmark}}(P)$ is concave with respect to $P$.

Let $L(P, g) = \sum_{i=1}^{|V|} P(x_i)g_i + \alpha$. Note that for a fixed key/partition $g$, $L(P, g)$ is a linear function of $P$. Define the function $h(y) = \min(1, y)$. The function $h(y)$ is the minimum of two concave functions. Thus, $h(y)$ is concave. Since the composition of a concave function with a linear function is concave, the term inside the expectation:

\begin{equation}
    \begin{aligned}
        &\phi(P, g)= h(L(P, g))\\
        =& \min\left(1, \sum_{i=1}^{|V|} P(x_i)g_i + \alpha\right),
    \end{aligned}
\end{equation}

is concave with respect to $P$ for any fixed $g$. The expected green ratio $g_{\text{DiPmark}}(P)$ is the expectation of $\phi(P, g)$ over the distribution of watermark keys $g$. Since expectation is a linear operator with non-negative weights (probabilities), and a non-negative weighted sum of concave functions preserves concavity, $g_{\text{DiPmark}}(P)$ is concave.

\subsection{MCMark}
\paragraph{Expected Green Ratio.} For MCMark~\citep{chen2025improved}, with channel number $l$, the expected green ratio is:

$$g_{\text{MCMark}}(P) = \mathbb{E}_{g} \left[ \min\left(1, l \sum_{i=1}^{|V|} P(x_i)g_i\right) \right],$$

where $g_i \in \{0,1\}$ determines the partition of the vocabulary, $\sum_i g_i=\frac{|V|}{l}$.

\paragraph{Proof of Concavity.} The proof follows the same logic as DiPmark.

Let $K(P, g) = l \sum_{i=1}^{|V|} P(x_i)g_i$. This term represents the scaled probability mass falling into a specific channel determined by $g$. $K(P, g)$ is strictly linear with respect to the distribution $P$. Consider the function $h(y) = \min(1, y)$, which is a concave function. Therefore, the composition $\psi(P, g) = h(K(P, g)) = \min(1, l \sum P(x_i)g_i)$ is concave with respect to $P$.

Finally, taking the expectation over $g$:

$$g_{\text{MCMark}}(P) = \mathbb{E}_{g} [\psi(P, g)].$$

Since the expectation preserves concavity, $g_{\text{MCMark}}(P)$ is a concave function of the token distribution $P$.

\begin{figure*}[!h]
    \centering
    \begin{subfigure}{0.49\textwidth}
        \centering
        \includegraphics[width=\linewidth]{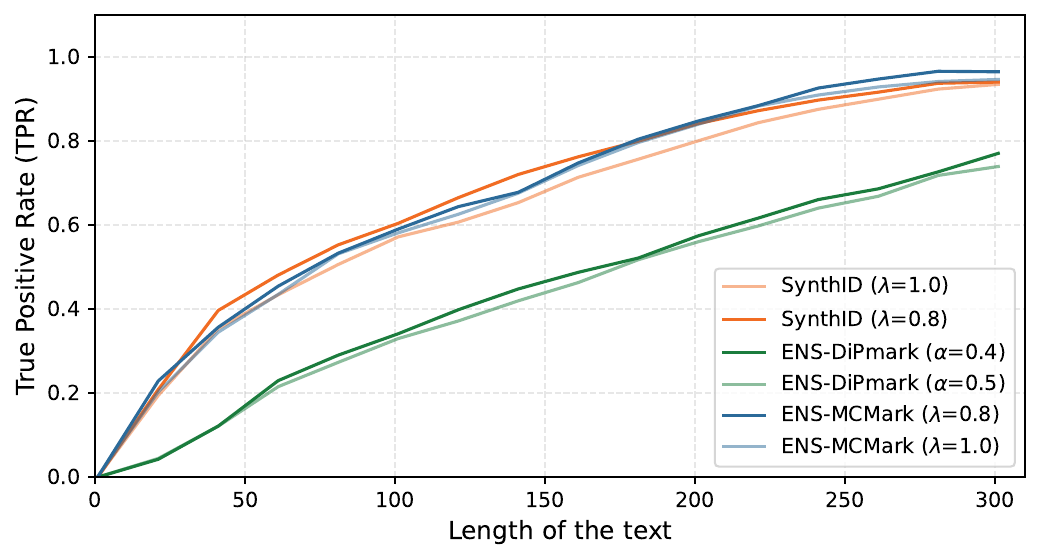}
        \caption{C4}
        \label{fig:sub1}
    \end{subfigure}
    \hfill
    \begin{subfigure}{0.49\textwidth}
        \centering
        \includegraphics[width=\linewidth]{figures/res_vs_len/final_mmw_story/combined_acc_vs_len_3files_alpha-1.0_l-1.pdf}
        \caption{MMW Story}
        \label{fig:sub2}
    \end{subfigure}
    \vspace{-2mm}
    \caption{Watermark detectability as a function of text length on C4 and Longform QA using Llama-3.2-3B-Instruct. The threshold for false positive rate is set to 0.01\%. Weaker ensemble watermarks yield consistently higher detection performance across datasets.}
    \label{fig:res_vs_len_full}
    \vspace{-3mm}
\end{figure*}
\section{A General Weaker Distortion-Free Watermark Framework}
\label{appendix:weaker_wm}

\subsection{Proof for Theorem~\ref{thm:distortion-free}}
\begin{proof}
By definition of $F_\lambda$ and linearity of expectation, we have
\begin{equation}
\begin{aligned}
&\mathbb{E}_{k \sim P_{\mathcal{K}}}
\big[
F_\lambda(P_M(\cdot \mid \bm{x}_{1:t}), k)
\big]\\
=&
\lambda
\mathbb{E}_{k \sim P_{\mathcal{K}}}
\big[
F(P_M(\cdot \mid \bm{x}_{1:t}), k)
\big]\\
&+ (1-\lambda) P_M(\cdot \mid \bm{x}_{1:t}).
\end{aligned}
\end{equation}
Since $F$ is distortion-free by assumption,
\(
\mathbb{E}_{k \sim P_{\mathcal{K}}}
[
F(P_M(\cdot \mid \bm{x}_{1:t}), k)
]
=
P_M(\cdot \mid \bm{x}_{1:t}),
\)
which immediately yields
\(
\mathbb{E}_{k \sim P_{\mathcal{K}}}
[
F_\lambda(P_M(\cdot \mid \bm{x}_{1:t}), k)
]
=
P_M(\cdot \mid \bm{x}_{1:t}).
\)
\end{proof}

\begin{table*}[]
\centering
\caption{Ablation study on the number of ensemble layers for ENS-MCMark using the Llama3-3B and Mistral-7B models. Results are reported as the true positive rate at a false positive rate of 0.01\% with 250 tokens per sample on the C4 dataset.}
\label{tab:layers_mcmark}
\begin{tabular}{@{}l|c|cccccc@{}}
\toprule
\multirow{2}{*}{Model} & \multicolumn{1}{l|}{\multirow{2}{*}{Watermark}} & \multicolumn{6}{c}{Layer Number} \\ \cmidrule(l){3-8}
 & \multicolumn{1}{l|}{} & 1 & 2 & 3 & 4 & \multicolumn{1}{c|}{5} & Best \\ \midrule
\multirow{2}{*}{Llama3-3B} & ENS-MCMark ($\lambda$=1) & \textbf{86.6\%} & \textbf{91.1\%} & 92.0\% & 92.4\% & \multicolumn{1}{c|}{91.8\%} & 92.4\% \\
 & ENS-MCMark ($\lambda$=0.8) & 77.1\% & 87.8\% & \textbf{92.6\%} & \textbf{93.6\%} & \multicolumn{1}{c|}{\textbf{93.7\%}} & \textbf{93.7\%} \\ \midrule
\multirow{2}{*}{Mistral-7B} & ENS-MCMark ($\lambda$=1) & \textbf{78.4\%} & \textbf{86.6\%} & \textbf{88.1\%} & \textbf{87.5\%} & \multicolumn{1}{c|}{87.5\%} & 88.1\% \\
 & ENS-MCMark ($\lambda$=0.8) & 57.8\% & 79.2\% & 85.0\% & 86.2\% & \multicolumn{1}{c|}{\textbf{88.8\%}} & \textbf{88.8\%} \\ \bottomrule
\end{tabular}
\end{table*}

\subsection{Proof for Theorem~\ref{thm:entropy-preserve}}

\begin{proof}
Fix a key $k$. By construction,
\(
F_\lambda(P_M(\cdot \mid \bm{x}_{1:t}), k)
\)
is a convex combination of the two distributions
\(
F(P_M(\cdot \mid \bm{x}_{1:t}), k)
\)
and
\(
P_M(\cdot \mid \bm{x}_{1:t}).
\)
Since Shannon entropy is a concave function over the probability simplex, we have
\begin{equation}
\begin{aligned}
&H(F_\lambda(P_M(\cdot \mid \bm{x}_{1:t}), k))\\
=&
H\big(
\lambda F(P_M(\cdot \mid \bm{x}_{1:t}), k)
+ (1-\lambda) P_M(\cdot \mid \bm{x}_{1:t})
\big) \\
\ge&
\lambda H(F(P_M(\cdot \mid \bm{x}_{1:t}), k))\\
&+ (1-\lambda) H(P_M(\cdot \mid \bm{x}_{1:t})).
\end{aligned}
\end{equation}
Taking expectation over $k \sim P_{\mathcal{K}}$ yields
\begin{equation}
\begin{aligned}
&\mathbb{E}_{k}
\big[
H(F_\lambda(P_M(\cdot \mid \bm{x}_{1:t}), k))
\big]\\
\ge&
\lambda
\mathbb{E}_{k}
\big[
H(F(P_M(\cdot \mid \bm{x}_{1:t}), k))
\big]\\
&+ (1-\lambda) H(P_M(\cdot \mid \bm{x}_{1:t})).
\end{aligned}
\end{equation}
Finally, since $F$ is distortion-free, Jensen's inequality implies
\(
\mathbb{E}_{k}[H(F(P_M(\cdot \mid \bm{x}_{1:t}), k))]
\le H(P_M(\cdot \mid \bm{x}_{1:t})).
\)
Substituting this bound into the inequality above gives
\(
\mathbb{E}_{k}[H(F_\lambda(\cdot))]
\ge
\mathbb{E}_{k}[H(F(\cdot))],
\)
which completes the proof.
\end{proof}

\section{Models and Datasets}
\label{appendix:models_datasets}
In our experiments, we follow prior works~\citep{chen2025improved,wu2026ensemble} and utilize several datasets, including a subset of the C4 dataset~\citep{raffel2020exploring}, the MMW Story dataset~\citep{piet2023mark}, and the Longform QA dataset from WaterBench~\citep{tu2023waterbench}. For text generation, we employ language models such as Llama-3.2-3B-Instruct~\citep{dubey2024llama} and Mistral-7B-Instruct-v0.3~\citep{jiang2023mistral}.

We run our experiments on 4 NVIDIA RTX 6000 Ada GPUs.

\section{Additional Experiments}
\subsection{Watermark Detectability under Different Length}
\label{appendix:res_vs_len}

Fig.~\ref{fig:res_vs_len} and \ref{fig:res_vs_len_full} illustrates how watermark detectability evolves as a function of text length across different datasets. We observe that weaker watermark configurations consistently dominate their stronger counterparts, highlighting the benefit of preserving token entropy. These trends are consistent across models, datasets and watermark schemes, demonstrating the robustness and generality of the proposed weaker ensemble watermarking framework.

\subsection{Ablation on the Number of Ensemble Layers}
\label{appendix:number_of_layers}
To mitigate the effect of the number of layers, we evaluate the True Positive Rate (TPR) across a varying number of ensemble layers for ENS-DiPmark, SynthID, and ENS-MCMark on both the Llama3-3B and Mistral-7B models.

As shown in Table~\ref{tab:layers_dipmark},\ref{tab:layers_synthid} and \ref{tab:layers_mcmark}, stronger watermarks consistently exhibit superior detectability in the initial layers. However, the performance of these stronger watermarks quickly plateaus or even degrades as the number of layers increases due to rapid entropy collapse.  In contrast, weaker watermarks preserve more entropy, allowing the ensemble to accumulate reliable statistical signals over a larger number of layers. Across all three methods, the weaker configurations ultimately overtake their stronger counterparts in later layers, consistently achieving a higher maximum detectability when an adequate number of ensemble layers is employed.

\subsection{Results on Larger Models}

To verify that the benefit of weaker ensemble watermarking is not limited to the 3B--7B backbones used in our main experiments, we further evaluate all three watermarking methods on \texttt{meta-llama/Meta-Llama-3-70B-Instruct}. Due to the inference cost of the 70B model, we use 96 samples from the MMW Story dataset and consider generation lengths of 150 and 250 tokens. We otherwise follow the same detection protocol and watermark-strength settings as in the main experiments. As shown in Table~\ref{tab:mmw_story_llama70b}, weakening the watermark consistently improves or matches detection performance on the 70B backbone.

\begin{table*}[t]
\centering
\caption{
Watermark detection performance with \texttt{meta-llama/Meta-Llama-3-70B-Instruct}
on the MMW Story dataset (96 samples) under different token budgets.
We report TPR at fixed FPR thresholds (0.1\%, 0.01\%, 0.001\%) and the
median detection p-value across samples. Lower p-values and higher TPR
indicate stronger detectability.
}
\label{tab:mmw_story_llama70b}
\setlength{\tabcolsep}{2pt}
\renewcommand{\arraystretch}{1.15}
\begin{tabular}{@{}l|cccc|cccc@{}}
\toprule
\multirow{3}{*}{Method} & \multicolumn{4}{c|}{150 tokens} & \multicolumn{4}{c}{250 tokens} \\ \cmidrule(l){2-9}
 & \multicolumn{3}{c|}{TPR@FPR} & \multirow{2}{*}{\begin{tabular}[c]{@{}c@{}}Median\\ p-value$\downarrow$\end{tabular}} & \multicolumn{3}{c|}{TPR@FPR} & \multirow{2}{*}{\begin{tabular}[c]{@{}c@{}}Median\\ p-value$\downarrow$\end{tabular}} \\
 & 0.1\%$\uparrow$ & 0.01\%$\uparrow$ & \multicolumn{1}{c|}{0.001\%$\uparrow$} &  & 0.1\%$\uparrow$ & 0.01\%$\uparrow$ & \multicolumn{1}{c|}{0.001\%$\uparrow$} &  \\ \midrule
ENS-DiPmark ($\alpha$=0.5) & 13\% & 4\% & \multicolumn{1}{c|}{1\%} & 7.71e-2 & 28\% & \textbf{17\%} & \multicolumn{1}{c|}{6\%} & 1.22e-2 \\
ENS-DiPmark ($\alpha$=0.4) & \textbf{17\%} & \textbf{8\%} & \multicolumn{1}{c|}{\textbf{2\%}} & \textbf{3.81e-2} & \textbf{34\%} & \textbf{17\%} & \multicolumn{1}{c|}{\textbf{10\%}} & \textbf{6.62e-3} \\ \midrule
SynthID ($\lambda$=1) & 51\% & 37\% & \multicolumn{1}{c|}{20\%} & 7.32e-4 & 76\% & 58\% & \multicolumn{1}{c|}{44\%} & 3.08e-5 \\
SynthID ($\lambda$=0.8) & \textbf{62\%} & \textbf{41\%} & \multicolumn{1}{c|}{\textbf{21\%}} & \textbf{2.29e-4} & \textbf{81\%} & \textbf{65\%} & \multicolumn{1}{c|}{\textbf{50\%}} & \textbf{9.33e-6} \\ \midrule
ENS-MCMark ($\lambda$=1) & 54\% & 36\% & \multicolumn{1}{c|}{16\%} & 6.74e-4 & 67\% & 58\% & \multicolumn{1}{c|}{45\%} & \textbf{1.90e-5} \\
ENS-MCMark ($\lambda$=0.8) & \textbf{56\%} & \textbf{38\%} & \multicolumn{1}{c|}{\textbf{25\%}} & \textbf{5.06e-4} & \textbf{72\%} & \textbf{60\%} & \multicolumn{1}{c|}{\textbf{46\%}} & \textbf{1.90e-5} \\ \bottomrule
\end{tabular}
\end{table*}

\begin{table*}[h]
\centering
\caption{
Watermark detection performance across five random seeds (42--46) using
Llama-3.2-3B-Instruct with a generation length of 150 tokens.
We report the mean and standard deviation of TPR at fixed FPR thresholds
(0.1\%, 0.01\%, and 0.001\%).
An asterisk ($^{*}$) denotes statistical significance.
}
\label{tab:seed_results_llama3b}
\setlength{\tabcolsep}{4pt}
\begin{tabular}{lccc}
\toprule
\textbf{Watermark}
& \textbf{TPR@FPR 0.1\%}
& \textbf{TPR@FPR 0.01\%}
& \textbf{TPR@FPR 0.001\%} \\
\midrule

ENS-DiPmark ($\alpha=0.5$)
& $51.7 \pm 6.5\%$
& $42.0 \pm 6.0\%$
& $34.5 \pm 5.6\%$ \\

ENS-DiPmark ($\alpha=0.4$)
& $\mathbf{53.7 \pm 7.4\%}$
& $\mathbf{44.5 \pm 6.7\%}^{*}$
& $\mathbf{36.5 \pm 5.9\%}^{*}$ \\

\midrule

SynthID ($\lambda=1$)
& $75.7 \pm 4.1\%$
& $68.0 \pm 4.4\%$
& $60.9 \pm 5.6\%$ \\

SynthID ($\lambda=0.8$)
& $\mathbf{77.7 \pm 4.4\%}^{*}$
& $\mathbf{71.0 \pm 6.0\%}^{*}$
& $\mathbf{64.3 \pm 6.3\%}^{*}$ \\

\midrule

ENS-MCMark ($\lambda=1$)
& $77.2 \pm 0.8\%$
& $70.6 \pm 1.5\%$
& $63.5 \pm 2.1\%$ \\

ENS-MCMark ($\lambda=0.8$)
& $\mathbf{79.0 \pm 1.1\%}^{*}$
& $\mathbf{72.4 \pm 1.1\%}^{*}$
& $\mathbf{66.3 \pm 1.9\%}^{*}$ \\

\bottomrule
\end{tabular}
\end{table*}

\subsection{Multi-seed Results}
To assess whether the gains from weaker watermarking are robust to randomness in generation and watermarking, we repeat the 150-token Llama-3.2-3B-Instruct experiment with five random seeds (42--46). Table~\ref{tab:seed_results_llama3b} reports the mean and standard deviation of TPR across seeds at the three FPR thresholds used throughout the paper.

\section{Comparison with SynthID}
\label{appendix:compare_synthid}

We clarify our contributions in the context of SynthID~\citep{dathathri2024scalable}. While they provide insights into a specific case, our work generalizes these phenomena into a universal theory applicable to all distortion-free ensemble watermarks.

Specifically, the scope of ~\citet{dathathri2024scalable} is limited to their proposed Tournament Sampling method (SynthID-Text). Their analytical findings rely on complex, method-specific derivations that are focused exclusively on collision entropy. In contrast, our framework applies broadly to all distortion-free ensemble watermarks. We provide a generalized proof that accommodates all concave entropies, rather than just collision entropy.

Furthermore, to analyze detectability, we introduce the concept of the expected green ratio. Using this concept, we theoretically demonstrate that the expected per-layer detectability will inevitably drop for all existing distortion-free ensemble watermarks because their expected green ratios are concave.

Finally, our work goes beyond theoretical analysis by offering a practical output. We propose a universal, applicable method to construct weaker distortion-free ensemble watermarks (as defined in Eq.~\ref{eq:weaker-watermark}). We empirically show that this approach yields significant improvements in both long-term detectability and robustness, a practical application not addressed in ~\citet{dathathri2024scalable}.

\end{document}